\documentclass[11pt]{article}
%\usepackage[left=0.5in,right=0.5in,top=0.5in,bottom=0.5in,
%            footskip=0.25in]{geometry}
\usepackage{graphicx,float,hyperref, indentfirst} 
\usepackage{amsmath,amsthm,amssymb,amsfonts,geometry,mathtools,xcolor, enumerate, comment}
\usepackage{varwidth}
\usepackage{subfigure}
\usepackage{tikz}
\usepackage{booktabs}
\usepackage[noend]{algpseudocode}
\usepackage{algorithm2e}
\usepackage{afterpage}
\RestyleAlgo{ruled}
\usepackage{multirow}

%\usepackage[noend, noline]{algorithm2e}
%\usepackage[displaymath]{lineno}
%\linenumbers
\parindent 5mm
\parskip 0.2mm
\oddsidemargin  0pt \evensidemargin 0pt \marginparwidth 0pt
\marginparsep 0pt \topmargin 0pt \headsep 0pt \textheight 8.8in
\textwidth 6.6in

\SetKwFor{RepTimes}{repeat}{times}{end}
 
\theoremstyle{plain}
\newtheorem{theorem}{Theorem}
\newtheorem{lemma}[theorem]{Lemma}

\newtheorem{proposition}[theorem]{Proposition}
\newtheorem{corollary}[theorem]{Corollary}

\theoremstyle{remark}

\newcommand{\C}{\mathbb{C}}

\newcommand{\E}{\mathbb{E}}

\title{Improved Sublinear Algorithms for Classical and Quantum Graph Coloring}
\author{Asaf Ferber, Liam Hardiman, Xiaonan Chen}
\date{\today}

\begin{document}

\maketitle
\begin{abstract}
    We present three sublinear randomized algorithms for vertex-coloring of graphs with maximum degree $\Delta$. The first is a simple algorithm that extends the idea of Morris and Song to color graphs with maximum degree $\Delta$ using $\Delta+1$ colors. Combined with the greedy algorithm, it achieves an expected runtime of $O(n^{3/2}\sqrt{\log n})$ in the query model, improving on Assadi, Chen, and Khanna’s algorithm by a $\sqrt{\log n}$ factor in expectation. When we allow quantum queries to the graph, we can accelerate the first algorithm using Grover's famous algorithm, resulting in a runtime of $\Tilde{O}(n^{4/3})$ quantum queries. Finally, we introduce a quantum algorithm for $(1+\epsilon)\Delta$-coloring, achieving $O(\epsilon^{-1}n^{5/4}\log^{3/2}n)$ quantum queries, offering a polynomial improvement over the previous best bound by Morris and Song.
\end{abstract}
\section{Introduction}
Given an integer $k\in \mathbb{N}$ and a graph $G=(V,E)$, a \emph{proper $k$-coloring} is a function $c:V\rightarrow \{1,\ldots,k\}$ such that for each $\{v,u\}\in E$ we have $c(u)\neq c(v)$ (that is, no edge is \emph{monochromatic}). 
The smallest positive $k$ such that $G$ admits a proper $k$-coloring is called the \emph{chromatic number} of $G$, denoted by $\chi(G)$.
While determining $\chi(G)$ is NP-hard \cite{Karp1972}, it is easy to color a graph $G$ with maximum degree $\Delta$ using $\Delta+1$ colors. A simple greedy algorithm achieves this by assigning each vertex a color different from those of its neighbors. Since each vertex has at most $\Delta$ neighbors, there is always at least one available color to ensure no edge becomes monochromatic. Without making further assumptions about $G$, this approach cannot do with fewer than $\Delta+1$ colors: consider a clique or an odd cycle.

To better understand the performance of coloring algorithms, we can evaluate them using various models \cite{Beame_MPC,ChangCongestedClique,Goldreich_2017}. One such model is the \emph{query model}, where we assume the graph is stored in a black box and we have access to an oracle that answers the following queries.
\begin{itemize}
    \item Adjacency queries: we may ask whether the vertices $u$ and $v$ are adjacent in $V$.
    In other words, if $A$ is the adjacency matrix of $G$, we may query $A_{u,v}$ for arbitrary $u$ and $v$.

    \item Degree queries: for any vertex $v\in V$, we may query its degree, denoted $d(v)$.

    \item Neighborhood queries: for any vertex $v$ and any integer $j$, we may query the $j$-th neighbor of $v$ (the ordering of the neighbors is arbitrary, but fixed).
    If $v$ does not have $j$ neighbors, then we assume the oracle returns some special symbol $\perp$.
\end{itemize}
Here we judge an algorithm's efficiency by the number of queries it makes to such an oracle.
The aforementioned greedy algorithm makes $O(|E|)$ neighborhood queries. For dense graphs, this becomes impractical since the number of edges can be extremely large, and computational resources may not scale accordingly. Therefore, we aim to design sublinear algorithms—in terms of the number of edges—that can still efficiently color $G$.

In this direction, Assadi, Chen, and Khanna \cite{assadiChenKhannaSublinear} detailed a randomized algorithm that generates a valid $(\Delta+1)$-coloring with high probability in $O(n^{2}\log^2 n/\Delta)$ adjacency queries.\footnote{As part of the standard query model, we assume that we may query the degree of an arbitrary vertex. Consequently, determining $\Delta$ takes only $\Tilde{O}(n)$ queries, where the $\Tilde{O}$ notation omits logarithmic factors in $n$. Thus, we do not need to assume that 
$\Delta$ is known beforehand.} They use a technique known as palette sparsification, whereby we sample a small number of colors to form a palette of size $O(\log n)$ for each vertex, and then construct a valid coloring from them.
Combining their algorithm with the greedy algorithm, we achieve an algorithm that properly colors $G$ in $O(n^{3/2} \log n)$ queries.
This is just a log factor greater than the lower bound of $\Omega(n^{3/2})$ queries proved in the same paper. If one has $(1+\epsilon)\Delta$ colors, Morris and Song \cite{morrisSongColoring} recently provided a very simple randomized algorithm that outputs a valid $(1+\epsilon)\Delta$-coloring in $O(\epsilon^{-1}n^2/\Delta)$ queries in expectation. Again, when combined with the greedy algorithm, this algorithm runs in $O(\epsilon^{-1/2}n^{3/2})$ queries in expectation.

Both works use a common framework that involves two algorithms to properly color a graph $G$ with $n$ vertices and maximum degree $\Delta$. The first algorithm makes $f(\Delta)$ adjacency queries, where $f(\Delta)$ is monotonically decreasing with $\Delta$, while the second makes $g(\Delta)$ neighborhood queries, where $g(\Delta)$ is monotonically increasing with $\Delta$.
If $\Delta^*$ is chosen so that $f(\Delta^*) = g(\Delta^*)$, we can apply the first algorithm when $\Delta \geq \Delta^*$ and the second when $\Delta < \Delta^*$ to obtain a combined algorithm that runs in at most $f(\Delta^*) = g(\Delta^*)$ queries.
%By applying the first algorithm when $\Delta$ is large and the second when $\Delta$ is small, we derive an optimal algorithm that runs in $f(\Delta^*) = g(\Delta^*)$ queries, where $\Delta^*$ is chosen such that $f(\Delta^*) = g(\Delta^*)$.
The improvements in above works came from enhancing the function $f$ while using $g(\Delta)=n\Delta$ from the greedy algorithm.

Following this framework, our first contribution is a simple Las Vegas algorithm for $(\Delta+1)$-coloring that improves the function $f(\Delta)$. Specifically, our algorithm uses $f(\Delta)=O(n^2\log n/\Delta)$ adjacency queries in expectation. In comparison, the algorithm of Assadi, Chen, and Khanna is a Monte Carlo algorithm that succeeds with probability \( 1 - 1/\mathrm{poly}(n) \) for some large polynomial in \( n \). However, their algorithm implicitly has an expected running time of \( f(\Delta) = O(n^2 \log^2 n / \Delta) \), so our approach improves on their result by a factor of \( \log n \) in expectation. On the other hand, our Las Vegas algorithm can be converted into a Monte Carlo algorithm using standard probabilistic boosting techniques. To achieve the same level of success probability as Assadi, Chen, and Khanna's algorithm, we require an additional \( \log n \) factor. In this case, the query complexity of our algorithm matches theirs, while still maintaining the same success probability.

Our method is relatively simple and easy to implement and analyze. Roughly speaking, we process the vertices sequentially, assigning each a random color and using adjacency queries to check if any of its neighbors share that color. For vertices with large degrees, this method can be more efficient than examining all neighbors beforehand, as randomly selecting a color still offers a good chance of avoiding a monochromatic edge. This approach is inspired by the work of Morris and Song \cite{morrisSongColoring} on $(1+\epsilon)\Delta$-coloring. By randomizing the order in which we visit the vertices of our graph, we are able to adapt it to $(\Delta+1)$-coloring. 

\begin{theorem}\label{thm: classical Delta+1}
    There is a randomized algorithm that, given a graph $G$ of maximum degree $\Delta$, properly $(\Delta+1)$-colors $G$ using $O(n^{2}\log n/\Delta)$ adjacency queries in expectation.
\end{theorem}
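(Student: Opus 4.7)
I will analyze the following Las Vegas algorithm: draw a uniformly random permutation $\pi$ of $V$ and process the vertices in the order $\pi(1),\dots,\pi(n)$; for each $v=\pi(i)$, repeatedly sample a color $c\in\{1,\dots,\Delta+1\}$ uniformly at random and, using a per-color list of already colored vertices, query $v$'s adjacency with every previously colored vertex of color $c$. If none is a neighbor of $v$, assign color $c$ to $v$; otherwise resample. The algorithm always terminates since at each step at most $\Delta$ of the $\Delta+1$ colors are ``bad'' for $v$.

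For the query analysis, fix $v$ and let $i=i(v)$ be its position in $\pi$, $D_v$ the number of its previously colored neighbors, and $B_v\le D_v$ the number of distinct colors those neighbors use. The trials are i.i.d.\ uniform colors stopping at the first ``good'' color, and the number of queries in a trial depends only on that trial's color; a direct computation (essentially Wald's identity) then gives
\[
\E[\,\text{queries spent on }v\mid \pi\,]\;\le\;\frac{\Delta+1}{\Delta+1-B_v}\cdot\frac{i-1}{\Delta+1}\;\le\;\frac{i-1}{\Delta+1-D_v},
\]
where the middle factor $(i-1)/(\Delta+1)$ is the expected size of a uniformly chosen color class among the first $i-1$ colored vertices.

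It therefore suffices to prove $\sum_v \E[(i(v)-1)/(\Delta+1-D_v)] = O(n^{2}\log n/\Delta)$. The key reparameterization is by the rank $R_v\in\{1,\dots,d(v)+1\}$ of $v$ among $\{v\}\cup N(v)$ in $\pi$: by symmetry $R_v$ is uniform, and $D_v=R_v-1$. Moreover, conditional on $R_v=r$, the absolute position $i(v)$ is the $r$-th smallest of the $d(v)+1$ positions held by $\{v\}\cup N(v)$, so $\E[i(v)-1\mid R_v=r]\le r(n+1)/(d(v)+2)$. Writing $d=d(v)$ and averaging over $r$,
\[
\E\!\left[\frac{i(v)-1}{\Delta+1-D_v}\right]\;\le\;\frac{n+1}{(d+1)(d+2)}\sum_{r=1}^{d+1}\frac{r}{\Delta+2-r}.
\]

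The substitution $s=\Delta+2-r$ shows that the inner sum equals $(\Delta+2)(H_{\Delta+1}-H_{\Delta-d})-(d+1)$. A short calculation, splitting into the regimes $d\le\Delta/2$ (where the sum is $O((d+1)(d+2)/\Delta)$) and $d>\Delta/2$ (where it is $O(\Delta\log\Delta)$), yields the per-vertex bound $O(n\log n/\Delta)$, and summing over $v$ gives the theorem. The main obstacle I anticipate is the high-degree regime $d(v)\approx\Delta$, where $\Delta+1-D_v$ can be as small as $1$: the harmonic-number identity above is precisely what controls this tail, and the $\log n$ factor in the theorem traces back to $H_{\Delta+1}=\Theta(\log\Delta)$ there.
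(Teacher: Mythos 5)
Your proposal is correct, and it takes a genuinely different route from the paper's proof despite analyzing the same algorithm and starting from the same per-trial bound $\E_c[Q]\le \E_c[|\chi^{-1}(c)|]/p$. The paper organizes the analysis by time step: for each $t$ it writes $\Pr[L_t(\sigma)=k]$ as a hypergeometric probability $\binom{t-1}{k}\binom{n-t}{d(v)-k}/\binom{n-1}{d(v)}$, simplifies the resulting double sum over $k$ and $v$ using generating-function identities (recognizing $\sum_k\binom{t-1}{k}\binom{n-t}{d(v)-k}\frac{1}{d(v)+1-k}$ as a coefficient of $(1+x)^n/(x(n-t+1))$), and finally sums over $t$, where the $\log n$ factor arises from $\sum_t \frac{1}{n-t+1}$. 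You instead organize by vertex: you reparameterize by the uniform rank $R_v$ of $v$ in $\{v\}\cup N(v)$, which makes $D_v=R_v-1$ exact, and you exploit the standard order-statistic fact that, conditional on $R_v=r$, the position $i(v)$ is the $r$-th smallest of a uniform $(d(v)+1)$-subset of $[n]$ with mean $r(n+1)/(d(v)+2)$; the $\log$ factor then arises from the harmonic sum $H_{\Delta+1}-H_{\Delta-d}$ in the high-degree regime. Both expose the same underlying tension (early vertices have small color classes but many uncolored neighbors, late vertices the reverse), but yours decouples the rank from the absolute position cleanly, avoids the Vandermonde/generating-function manipulations, and yields a direct per-vertex bound $O(n\log n/\Delta)$ rather than a per-time-step one, which some may find more transparent. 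The only spot that deserves a bit more care in a final writeup is the ``essentially Wald's identity'' line: since the number of queries in a trial is a function of the trial's color and the last trial is conditioned to be good, the clean way to justify $\E[Q]\le \E_c[|\chi^{-1}(c)|]/p$ is the one-line recursion $\E[Q]\le \E_c[|\chi^{-1}(c)|]+(1-p)\E[Q]$ (as the paper does), rather than appealing to Wald directly.
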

Using $g(\Delta)=n\Delta$ from the greedy algorithm, we derive the following corollary.
\begin{corollary}
    There is a randomized algorithm that, given a graph $G$ of maximum degree $\Delta$, properly $(\Delta+1)$-colors $G$ using $O(n^{3/2}\sqrt{\log n})$ queries in expectation.\footnote{Recently, the authors learned that this result has been independently proved by Sepehr Assadi, who presented it in a lecture at Simons Institute Bootcamp in May 2024.}
\end{corollary}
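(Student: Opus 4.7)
The plan is to follow the two-algorithm framework described in the introduction, balancing the algorithm of Theorem \ref{thm: classical Delta+1} against the naive greedy algorithm by choosing the threshold on $\Delta$ that equalizes their query complexities. First, I would use $O(n)$ degree queries to compute $\Delta$ exactly (this cost is absorbed into the final bound). Then I would branch on the value of $\Delta$ relative to a threshold $\Delta^*$ to be chosen.

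Next, I would set $f(\Delta) = n^2 \log n / \Delta$ (the expected cost from Theorem \ref{thm: classical Delta+1}) and $g(\Delta) = n\Delta$ (the cost of the greedy algorithm, which reads the full neighborhood of every vertex). Solving $f(\Delta^*) = g(\Delta^*)$ gives
\[
\frac{n^2 \log n}{\Delta^*} = n\Delta^* \quad\Longleftrightarrow\quad \Delta^* = \sqrt{n \log n},
\]
at which the common value is $n\Delta^* = n^{3/2}\sqrt{\log n}$.

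The combined algorithm is then: if $\Delta \geq \Delta^*$, run the algorithm of Theorem \ref{thm: classical Delta+1}, whose expected query cost is $O(n^2 \log n / \Delta) = O(n^2 \log n / \sqrt{n \log n}) = O(n^{3/2} \sqrt{\log n})$; otherwise, run the greedy algorithm, whose cost is $O(n\Delta) = O(n \sqrt{n \log n}) = O(n^{3/2}\sqrt{\log n})$. Both branches output a proper $(\Delta+1)$-coloring, the first by Theorem \ref{thm: classical Delta+1} and the second because the greedy algorithm always succeeds with $\Delta+1$ colors.

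There is no real obstacle here—everything reduces to the threshold calculation above, together with the observation that the monotonicity of $f$ (decreasing in $\Delta$) and of $g$ (increasing in $\Delta$) means the maximum of the two functions is minimized exactly at $\Delta^*$. The only care needed is to note that the $O(n)$ preprocessing cost of determining $\Delta$ is dominated by $n^{3/2}\sqrt{\log n}$, and that the expectation bound from Theorem \ref{thm: classical Delta+1} passes directly to the combined algorithm since the greedy branch is deterministic.
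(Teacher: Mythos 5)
Your proposal is correct and follows exactly the two-algorithm balancing framework that the paper describes in its introduction and invokes (without spelling out) to derive this corollary from Theorem \ref{thm: classical Delta+1}: threshold at $\Delta^* = \sqrt{n\log n}$, run the adjacency-query algorithm when $\Delta \geq \Delta^*$ and greedy otherwise, and absorb the $O(n)$ cost of computing $\Delta$. No gaps.
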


Can we do better than Theorem \ref{thm: classical Delta+1} if we have access to a quantum oracle? Following the approach of Morris and Song, we can utilize Grover's algorithm \cite{grover96}, which allows us to find a marked item in an unsorted list of $N$ items using only $O(\sqrt{N})$ quantum queries. By incorporating Grover's algorithm, we can achieve a further improvement in $f(\Delta)$, resulting in a quadratic speedup.

\begin{theorem}\label{thm: quantum adjacency}
    There is a quantum algorithm that, given a graph $G$ of maximum degree $\Delta$, properly $(\Delta+1)$-colors $G$ using $O(n^{3/2}\log n/\sqrt{\Delta})$ quantum adjacency queries in expectation .
\end{theorem}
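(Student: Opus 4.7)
My plan is to take the classical algorithm of Theorem~\ref{thm: classical Delta+1} essentially unchanged and accelerate only its inner conflict-detection loop using Grover's search. Recall that in that algorithm we process the vertices in a random order and, at each vertex $v$, we repeatedly draw a uniformly random candidate color $c\in\{1,\ldots,\Delta+1\}$ and test whether some previously colored neighbor of $v$ already uses $c$. The classical test iterates through the set $S_c$ of previously colored vertices currently bearing color $c$ and issues one adjacency query per element, costing $|S_c|$ queries per test. I will replace this linear scan by a Grover search over $S_c$ for a neighbor of $v$, which decides the existence of such a neighbor using only $O(\sqrt{|S_c|})$ quantum adjacency queries; this is the source of the quadratic speedup.

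For the query analysis, fix any inner iteration and condition on the state of the algorithm just before we sample $c$. Since $c$ is uniform over $\Delta+1$ colors while at most $n$ vertices have been colored so far,
\[
\E\bigl[|S_c|\mid \text{state}\bigr] \le \frac{n}{\Delta+1},
\]
and Jensen's inequality gives $\E\bigl[\sqrt{|S_c|}\mid \text{state}\bigr]\le \sqrt{n/(\Delta+1)}$. Let $T$ denote the (random) total number of inner iterations over the whole run. Summing this conditional bound via the tower property and linearity of expectation yields
\[
\E\!\left[\sum_{t=1}^{T}\sqrt{|S_{c_t}|}\right] \le \E[T]\cdot\sqrt{\tfrac{n}{\Delta+1}}.
\]
On the other hand, the classical expected total is $\Theta(\E[T]\cdot n/\Delta)$, and Theorem~\ref{thm: classical Delta+1} bounds this by $O(n^2\log n/\Delta)$, so $\E[T]=O(n\log n)$. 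Substituting gives the claimed $O(n^{3/2}\log n/\sqrt{\Delta})$ expected quantum query complexity.

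The main obstacle I foresee is dealing with the intrinsic randomness of Grover's search: a single invocation detects a marked element only with constant success probability, whereas the algorithm must output a valid coloring. Two natural remedies are (i) to amplify each Grover call to success probability $1-1/\mathrm{poly}(n)$ at an extra $\log n$ factor per call, which would then require a sharper handle on $\E[T]$ (or directly on $\E[\sqrt{|S_c|}]$) to preserve the target complexity; or (ii) to use constant-success Grover throughout the main loop and then run a single verification pass, which can itself be implemented in $O(n^{3/2})$ quantum queries via a Grover scan over all other vertices from each vertex, restarting the main loop on the (constant-probability) event that verification fails. I would follow option (ii) so that the algorithm is Las Vegas and stays within the stated bound.
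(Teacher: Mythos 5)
Your high-level strategy — swap the inner linear conflict scan for Grover's search, bound the per-call cost with Jensen's inequality, and inherit the rest of the analysis from Theorem~\ref{thm: classical Delta+1} — is exactly the paper's approach. However, the specific step where you bound $\E[T]$ has a gap. You claim ``the classical expected total is $\Theta\bigl(\E[T]\cdot n/\Delta\bigr)$,'' but this is only an \emph{upper} bound: the cost of an inner iteration is $|S_c|$, which in expectation is $\le n/(\Delta+1)$ but can be much smaller (it is $0$ at $t=1$ and grows gradually). The classical analysis gives $\sum_t \E[Q_t] \le \E[T]\cdot n/(\Delta+1)$, and from this direction you cannot deduce $\E[T] = O(n\log n)$ out of $\sum_t \E[Q_t] = O(n^2\log n/\Delta)$; you would need the reverse inequality, which is false.

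The correct way to get the bound on $\E[T]$ — and what the paper effectively does — is to track $\E_\sigma\bigl[1/p_t(\sigma)\bigr]$ directly, where $p_t(\sigma)$ is the probability that a freshly drawn color for $v_{\sigma(t)}$ is valid. The number of inner iterations at step $t$ is geometric with success probability $p_t(\sigma)$, so $\E[T] = \sum_t \E_\sigma[1/p_t(\sigma)]$. Using $p_t(\sigma) \ge (\Delta+1-L_t(\sigma))/(\Delta+1)$, and the sum identity $\sum_t n\,\E_\sigma\bigl[1/(\Delta+1-L_t(\sigma))\bigr] = O(n^2\log n/\Delta)$ established inside the classical proof, one gets $\E[T] \le (\Delta+1)\cdot O(n\log n/\Delta) = O(n\log n)$. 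In other words, you must reuse the \emph{internal} sum estimate from the classical proof rather than just its headline conclusion. The paper packages this more cleanly by bounding $\E_c[Q_t(\sigma)] \le \frac{1}{p_t(\sigma)}\cdot O\bigl(\sqrt{n/(\Delta+1)}\bigr)$ and then taking $\E_\sigma$ and summing, exactly as in Theorem~\ref{thm: classical Delta+1}, which avoids having to isolate $\E[T]$ at all.

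Your concern about the constant failure probability of Grover's search is legitimate and, if anything, is glossed over in the paper, which invokes Theorem~\ref{thm: Grover} as though BBHT were error-free. Your option (ii) — run with constant-success Grover and append a verification pass, restarting on failure — is a sound way to make the algorithm Las Vegas without a per-call $\log n$ amplification penalty, and is consistent with the ``in expectation'' bound the theorem claims.
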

\begin{corollary}
    There is a quantum algorithm that, given a graph $G$ of maximum degree $\Delta$, properly $(\Delta+1)$-colors $G$ using $\Tilde{O}(n^{4/3})$ quantum queries in expectation .
\end{corollary}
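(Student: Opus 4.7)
The plan is to instantiate the two-regime framework described earlier in the paper: apply the quantum algorithm of Theorem~\ref{thm: quantum adjacency} when $\Delta$ is large, and fall back on the classical greedy algorithm when $\Delta$ is small. Theorem~\ref{thm: quantum adjacency} gives a cost of $f(\Delta) = O(n^{3/2}\log n / \sqrt{\Delta})$ quantum adjacency queries in expectation, a decreasing function of $\Delta$, while greedy gives $g(\Delta) = O(n\Delta)$ neighborhood queries, an increasing function of $\Delta$.

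First, I would spend $O(n)$ degree queries to determine $\Delta$ (as noted in the footnote of the paper, this cost is absorbed into $\Tilde{O}(n^{4/3})$). Next, I would choose the cutoff $\Delta^*$ by equating $f(\Delta^*) = g(\Delta^*)$: solving $n^{3/2}\log n / \sqrt{\Delta^*} = n \Delta^*$ gives $\Delta^{*\,3/2} = n^{1/2}\log n$, hence $\Delta^* = n^{1/3}(\log n)^{2/3}$. At this threshold both $f(\Delta^*)$ and $g(\Delta^*)$ equal $n^{4/3}(\log n)^{2/3} = \Tilde{O}(n^{4/3})$.

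Finally, I would run the algorithm of Theorem~\ref{thm: quantum adjacency} whenever $\Delta \geq \Delta^*$ and the classical greedy algorithm otherwise. In the first case, since $f$ is decreasing, the expected query count is at most $f(\Delta^*) = \Tilde{O}(n^{4/3})$; in the second case, since $g$ is increasing, it is at most $g(\Delta^*) = \Tilde{O}(n^{4/3})$. There is no genuine obstacle here — the corollary is a direct optimization over the cutoff $\Delta^*$ — but it is worth noting that we are not invoking any quantum speedup in the low-$\Delta$ regime, so improving below $\Tilde{O}(n^{4/3})$ would require a quantum-accelerated analogue of the greedy routine.
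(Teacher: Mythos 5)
Your proof is correct and uses exactly the framework the paper intends: balance $f(\Delta) = O(n^{3/2}\log n/\sqrt{\Delta})$ from Theorem~\ref{thm: quantum adjacency} against $g(\Delta) = n\Delta$ from greedy at $\Delta^* = n^{1/3}(\log n)^{2/3}$, yielding $\tilde{O}(n^{4/3})$. This is precisely what the paper states (``The $\tilde{O}(n^{4/3})$ comes from using $g(\Delta)=n\Delta$ again from the classical greedy algorithm''), so there is nothing further to compare.
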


The $\Tilde{O}(n^{4/3})$ comes from using $g(\Delta)=n\Delta$ again from the classical greedy algorithm, similar to the approach taken by Morris and Song \cite{morrisSongColoring} for 
$(1+\epsilon)\Delta$-coloring. This result breaks the classical lower bound of $\Omega(n^{3/2})$, leading us to consider whether improvements to $g(\Delta)$ are possible using a quantum oracle. Our last contribution is a quantum algorithm that runs in $g(\Delta)=\epsilon^{-2}n\log^2 n\sqrt{\Delta}$ queries. This takes the role of the greedy algorithm in previous results. The key idea is to randomly partition the vertices and then efficiently identify the neighbors of a vertex in the same part using neighborhood queries. While achieving this efficiently in the classical setting seems unlikely, the quantum setting makes it attainable.

\begin{theorem}\label{thm: quantum neighborhood}
    There is a quantum algorithm that, given a graph $G$ of maximum degree $\Delta$, properly $(1+\epsilon)\Delta$-colors $G$ using $O(\epsilon^{-2}n\log^2 n\sqrt{\Delta})$ quantum neighborhood queries.
    This algorithm succeeds with probability at least $2/3$.
\end{theorem}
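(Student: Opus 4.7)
The plan is to build a quantum greedy-style subroutine based on a random partition of $V$. First, I would partition $V$ into $k=\Theta(\epsilon^2\Delta/\log n)$ parts $V_1,\ldots,V_k$ by placing each vertex independently into a uniformly chosen part, and assign each $V_i$ a private palette $P_i$ of $(1+\epsilon)\Delta/k=\Theta(\epsilon^{-2}\log n)$ colors, with the $P_i$ pairwise disjoint. The union of the palettes then has exactly $(1+\epsilon)\Delta$ colors, as required.

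A standard Chernoff bound shows that, for any fixed $v$, the number of $v$'s neighbors landing in the same part as $v$ is $\Delta/k$ in expectation and exceeds $(1+\epsilon)\Delta/k$ only with probability $\exp(-c\epsilon^2\Delta/k)=n^{-\Omega(1)}$, by our choice of $k$. A union bound over the $n$ vertices then yields a \emph{good partition} event, of probability at least $5/6$, on which the induced maximum degree inside every part $V_i$ is at most $|P_i|-1$, so that in principle each part can be greedily colored from its own palette without conflicts.

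Conditioned on this event, I would color the graph one vertex at a time in an arbitrary order. To color $v\in V_i$, for each color $c\in P_i$ I run Grover's algorithm on the length-$\deg(v)$ neighborhood list of $v$ (accessed via neighborhood queries) to search for an already-colored neighbor that lies in $V_i$ and is currently assigned color $c$; if the search reports failure, $c$ is declared usable, and we pick any usable color for $v$. Boosting each Grover call to success probability $1-1/n^3$ costs $O(\sqrt{\Delta}\log n)$ quantum neighborhood queries, and since $v$ has fewer than $|P_i|$ already-colored neighbors in $V_i$, at least one color in $P_i$ is always usable.

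The total quantum query count is $n\cdot|P_i|\cdot O(\sqrt{\Delta}\log n)=O(\epsilon^{-2}n\sqrt{\Delta}\log^2 n)$, matching the bound in the theorem, and a final union bound over the partition event together with the $\tilde{O}(n)$ Grover calls keeps the overall failure probability well below $1/3$. I expect the main technical subtlety to be the joint calibration of $k$ and $|P_i|$: the Chernoff slack $\epsilon\Delta/k$ must dominate the typical fluctuation $\sqrt{(\Delta/k)\log n}$, forcing $\Delta/k=\Omega(\epsilon^{-2}\log n)$, while the per-vertex Grover cost $|P_i|\sqrt{\Delta}$ determines the final exponents, so we want $|P_i|$ as small as possible. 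Both constraints pin down $k=\Theta(\epsilon^2\Delta/\log n)$, yielding exactly the query bound claimed.
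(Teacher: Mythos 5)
Your proposal is correct and reaches the stated query bound, but the per-vertex quantum subroutine is genuinely different from what the paper does. The paper also randomly partitions $V$ into $t = \Theta(\epsilon^2\Delta/\log n)$ parts with disjoint palettes, but then for each $v\in V_i$ it uses Grover's algorithm to \emph{enumerate} all of $v$'s neighbors inside $V_i$: each successful Grover call (probability at least $1/4$) returns a uniformly random in-part neighbor, so collecting all of them is a coupon-collector process with a lossy step, and the paper proves a dedicated tail bound (Lemma~\ref{lem: coupon collector}) and runs $\Theta\bigl(\tfrac{\Delta}{t}\log\tfrac{\Delta}{t}\cdot\tfrac{\log n}{\log\log n}\bigr)$ Grover searches per vertex before greedily assigning a color. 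You instead dispense with neighbor enumeration: for each candidate color $c\in P_i$ you run a \emph{decision} Grover search over $v$'s neighborhood list for the predicate ``the $j$-th neighbor of $v$ lies in $V_i$ and is currently colored $c$,'' boosted to error $1/\mathrm{poly}(n)$ by $O(\log n)$ repetitions, and accept any color that passes. Both routes cost $\tilde O(\epsilon^{-2}\sqrt{\Delta})$ queries per vertex and hence $O(\epsilon^{-2}n\sqrt{\Delta}\log^2 n)$ in total. Your variant is arguably simpler: it avoids the coupon-collector lemma entirely and relies only on Grover's one-sided decision behavior from Theorem~\ref{thm: Grover}, not on the more delicate fact that Grover samples a \emph{uniformly random} marked item; the trade-off is that your phase oracle must encode the evolving partial coloring, so the reflection circuit changes from vertex to vertex, though this costs no extra oracle queries since the coloring is a classically known function of the returned vertex. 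One small detail worth tightening (and it is equally loose in the paper's Lemma~\ref{lem: chernoff_degree}): the Chernoff event bounds the in-part degree by $(1+\epsilon)\Delta/t$, which equals the palette size $|P_i|$, whereas strict greedy correctness needs degree at most $|P_i|-1$; using $(1+\epsilon/2)$ in the degree bound (available with the same $t$) or adding one color per palette fixes this at no asymptotic cost.
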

 When combined with Theorem \ref{thm: quantum adjacency}, our algorithm achieves a polynomial improvement over the results presented in \cite{morrisSongColoring}.
\begin{corollary}
    There is a quantum algorithm that, given a graph $G$ of maximum degree $\Delta$, properly $(1+\epsilon)\Delta$-colors $G$ using $\tilde{O}(\epsilon^{-1}n^{5/4})$ queries.
    This algorithm succeeds with probability at least $2/3$.
\end{corollary}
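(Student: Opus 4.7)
The plan is to combine the $(\Delta+1)$-coloring algorithm of Theorem \ref{thm: quantum adjacency} (which a fortiori yields a $(1+\epsilon)\Delta$-coloring once $\Delta \geq 1/\epsilon$) with the $(1+\epsilon)\Delta$-coloring algorithm of Theorem \ref{thm: quantum neighborhood}, following the two-regime framework described in the introduction. Concretely, I first compute $\Delta$ using $n$ degree queries, which is negligible, and then pick a threshold $\Delta^*$: for $\Delta \geq \Delta^*$ I run the algorithm of Theorem \ref{thm: quantum adjacency}, which uses $f(\Delta) = O(n^{3/2}\log n/\sqrt{\Delta})$ quantum queries in expectation, and for $\Delta < \Delta^*$ I run the algorithm of Theorem \ref{thm: quantum neighborhood}, which uses $g(\Delta) = O(\epsilon^{-2}n\log^2 n\sqrt{\Delta})$ quantum queries. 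Both algorithms produce a proper $(1+\epsilon)\Delta$-coloring (the first one even a $(\Delta+1)$-coloring, which is stronger for $\Delta \geq 1/\epsilon$; the small-$\Delta$ corner can be absorbed into the bound by also tracking the trivial regime $\Delta = O(1/\epsilon)$, where $g(\Delta)$ is already tiny).

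Since $f$ is decreasing in $\Delta$ and $g$ is increasing, the worst case for the combined algorithm is $\max\{f(\Delta^*), g(\Delta^*)\}$, minimized by balancing the two. Solving
\[
\frac{n^{3/2}\log n}{\sqrt{\Delta^*}} \;=\; \epsilon^{-2} n \log^2 n \,\sqrt{\Delta^*}
\]
gives $\Delta^* = \epsilon^{2} n^{1/2}/\log n$. Plugging this into either side yields $\tilde O(\epsilon^{-1} n^{5/4})$ queries in expectation, which is the claimed bound.

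To turn the expectation bound into a high-probability one, I would apply Markov's inequality to the Las Vegas routine of Theorem \ref{thm: quantum adjacency}: halt it after $6$ times its expected query count, so it succeeds with probability at least $5/6$; the routine of Theorem \ref{thm: quantum neighborhood} already succeeds with probability at least $2/3$. In either regime the overall success probability is at least $2/3$, and the truncation only multiplies the query count by a constant, leaving the $\tilde O(\epsilon^{-1} n^{5/4})$ bound intact.

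The main subtle point—rather than a real obstacle—is handling the Las Vegas versus Monte Carlo distinction uniformly and making sure the choice of regime (which depends on $\Delta$, known from $O(n)$ degree queries) does not inflate the query complexity. Once the balancing computation is done and the truncation argument is in place, the corollary follows immediately from Theorems \ref{thm: quantum adjacency} and \ref{thm: quantum neighborhood}.
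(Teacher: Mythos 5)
Your proposal is correct and matches the paper's (implicit) derivation: balance $f(\Delta)=O(n^{3/2}\log n/\sqrt\Delta)$ from Theorem \ref{thm: quantum adjacency} against $g(\Delta)=O(\epsilon^{-2}n\log^2 n\sqrt\Delta)$ from Theorem \ref{thm: quantum neighborhood} at $\Delta^*=\epsilon^2 n^{1/2}/\log n$, yielding $O(\epsilon^{-1}n^{5/4}\log^{3/2}n)$ queries, and truncate the Las Vegas routine via Markov to keep overall success probability at least $2/3$. Your extra remarks on the small-$\Delta$ corner and on computing $\Delta$ with $O(n)$ degree queries are harmless and consistent with the paper's framework.
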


\begin{table}[h!]
    \centering
    \begin{tabular}{|l||c|c|c|c|}\hline
     \multicolumn{1}{|c||}{}& \multicolumn{2}{c|}{Classical} & \multicolumn{2}{c|}{Quantum}\\\hline
                            & Neighborhood  & Adjacency                 & Neighborhood              & Adjacency\\\hline\hline
     $(\Delta+1)$-coloring          & $n\Delta$     & $O\left(\frac{n^2 \log^2 n}{\Delta}\right)$\cite{assadiChenKhannaSublinear}, $O\left(\frac{n^2\log n}{\Delta}\right)$[$\star$]       &                   & $\tilde O \left(\frac{n^{3/2}}{\sqrt \Delta}\right)$[$\star$]\\\hline
     $(1+\epsilon)\Delta$-coloring  &               & $\frac{n^2}{\epsilon \Delta}$\ \cite{morrisSongColoring} & $\tilde O\left(\frac{n\sqrt \Delta}{\epsilon^2}\right)$[$\star$] & $\tilde O\left( \frac{n^{3/2}}{\epsilon^{3/2}\sqrt \Delta} \right)$\cite{morrisSongColoring}\\\hline
    \end{tabular}
    \caption{Summary of expected runtimes. A [$\star$] indicates this work.}
    \label{tab:my_label}
\end{table}

We provide a table summarizing the current best algorithms using neighborhood or adjacency queries. Blank cells indicate areas where no non-trivial algorithms have been discovered so far. Discovering new algorithms that fill these gaps or improving existing algorithms using only one type of query could lead to improvements in the combined algorithm. In the remainder of the paper, we prove Theorem \ref{thm: classical Delta+1} in Section 2 and Theorems \ref{thm: quantum adjacency} and \ref{thm: quantum neighborhood} in Section 3.

\section{Classical Sublinear Algorithm}
%We provide the following algorithm: under an arbitrary order, for each vertex, choose a random color and then check for edges in that color class, which is the set of all vertices with the given color under partial coloring. If there's an edge in it, roll a new color.
Let $V = \{v_1, \ldots, v_n\}$ be the vertices of $G$. We provide an algorithm that constructs the coloring $\chi: V\to [\Delta+1]$ by successively updating the color classes $\chi^{-1}(1), \ldots, \chi^{-1}(\Delta+1)$.
Start by selecting a permutation $\sigma \in S_n$ uniformly at random, and set $\chi_0^{-1}(c) = \emptyset$ for all colors $c\in[\Delta+1]$. At each step $0 < t \leq n$, choose a color $c$ 
uniformly at random for $v_{\sigma(t)}$. Let $\chi_t^{-1}(c)$ denote the set of vertices that have already been assigned the color $c$ up to step $t$. If the vertex $v_{\sigma(t)}$ is adjacent to any vertex in $\chi_t^{-1}(c)$, select a new random color $c'$. Otherwise, assign the color $c$ to $v_{\sigma(t)}$, and update $\chi_t^{-1}(c)$ by adding $v_{\sigma(t)}$ to the color class.
\begin{algorithm}[h]\label{alg: max degree color}
\caption{$(\Delta+1)$-Color($G,\Delta$)}
%\State Choose a random ordering for the vertices, $V = \{v_{\sigma(1)}, \ldots, v_{\sigma(n)}\}$
Choose a permutation $\sigma\in S_n$ uniformly at random

\For{$t = 1, \ldots, n$}{
    \While{$v_{\sigma(t)}$ is not colored}{
        Choose a color $c$ uniformly at random in $[\Delta + 1]$
        
        \For{$u \in \chi_t^{-1}(c)$}{
            Query if $\{u,v_{\sigma(t)}\}\in E$ 
            
            \If{$\{u,v_{\sigma(t)}\}\in E$}{
                \textbf{break}
            }
            Assign color $c$ to $v_{\sigma(t)}$ and update $\chi(c)$ with $v_{\sigma(t)}$
        }
    }
}
\Return the color assignment $\chi: V\to [\Delta+1]$
\end{algorithm}

%In particular, fix some permutation $\sigma$ on the vertex set.
\begin{proof}[Proof of Theorem \ref{thm: classical Delta+1}]
    
Let us analyze the expected runtime of Algorithm \ref{alg: max degree color}.
For each $t\leq n$, let $Q_t$ be the number of times we query the adjacency matrix before we assign a valid color to $v_{\sigma(t)}$, and let $Q_t(\sigma)$ be this random variable conditioned on the event that we ordered $V$ according to the permutation $\sigma$.
%For each $t \leq n$, let $Q_t(\sigma)$ be the number of times we query the adjacency matrix before we assign a valid color to $v_{\sigma(t)}$.
If we let $c$ be the first color we try to assign to $v_{\sigma(t)}$, 
%and $\chi_t(c)$ the color class of of $c$
let $p_t(\sigma)$ be the probability that it is a valid color, i.e., that no neighbor of $v_{\sigma(t)}$ has already been assigned the color $c$.
In the worst case, we need to check $v_{\sigma(t)}$ for adjacency with every vertex in $\chi^{-1}_t(c)$, so we can bound the expectation of $Q_t(\sigma)$ as follows.
\[
\E_c[Q_t(\sigma)] \leq p_t(\sigma)\E_c[|\chi^{-1}_t(c)| : c\text{ is valid}] + (1-p_t(\sigma))\bigg( \E_c[|\chi^{-1}_t(c)|: c\text{ is invalid}] + \E_c[Q_t(\sigma)]\bigg),
\]
so $\E_c[Q_t(\sigma)] \leq \E_c[|\chi^{-1}_t(c)|]/p_t(\sigma)$.
Since each vertex is colored randomly, we have
\[
\E_c [Q_t(\sigma)] \leq \frac{\E_c[|\chi^{-1}(c)|]}{p_t(\sigma)} \leq \frac{n}{\Delta+1}\cdot \frac{1}{p_t(\sigma)}.
\]
We obtain the expectation of $Q_t$ by averaging $Q_t(\sigma)$ over a uniformly chosen $\sigma$:
\[
\E[Q_t] \leq \frac{n}{\Delta+1}\cdot\E_\sigma \left[\frac{1}{p_t(\sigma)}\right].
\]
Now when we attempt to color $v_{\sigma(t)}$ by choosing $c$ at random, the worst case is that each neighbor of $v_{\sigma(t)}$ that has already been colored has a different color.
In other words, if we let
$$L_t(\sigma) = |\{i < t: \{v_{\sigma(i)},v_{\sigma(t)}\} \in E\}|$$
denote the number of neighbors of $v_{\sigma(t)}$ that have already been colored by time $t$, then
\[
p_t(\sigma) \geq \frac{\Delta + 1 - L_t(\sigma)}{\Delta+1}.
\]
The expectation is then bounded by
\begin{equation}\label{eqn:Qt exp bound}
\E[Q_t] \leq n\cdot  \E_\sigma\left[ \frac{1}{\Delta+1 - L_t(\sigma)}\right] = n\sum_{k=0}^\Delta \frac{\Pr[L_t(\sigma) = k]}{\Delta+1 - k}.
\end{equation}
After counting the number of permutations where $v_{\sigma(t)} = v$ and exactly $k$ neighbors of $v$ come before $v$ with respect to $\sigma$, the probability in the summation is given by
\[
\Pr[L_t(\sigma) = k] = \frac{1}{n}\sum_{v\in V}\Pr[L_t(\sigma) = k \mid v_{\sigma(t)} = v] = \frac{1}{n}\sum_{v\in V} \frac{\binom{t-1}{k}\binom{n-t}{d(v)-k}}{\binom{n-1}{d(v)}}.
\]
Substituting this into (\ref{eqn:Qt exp bound}) and switching the order of summation gives
\begin{equation}\label{eqn:expectation sum}
\E[Q_t] \leq \sum_{v\in V}\frac{1}{\binom{n-1}{d(v)}}\sum_{k = 0}^{d_v}\binom{t-1}{k}\binom{n-t}{d(v)-k} \frac{1}{\Delta+1-k}.
\end{equation}

Now if $d(v)$ is small, say less than $\epsilon \Delta$ for some small positive $\epsilon$, then
\begin{equation}\label{eqn: large degree}
\sum_{k=0}^{d(v)}\binom{t-1}{k}\binom{n-t}{d(v)-k}\frac{1}{\Delta+1-k} \leq \frac{1}{(1-\epsilon)\Delta}\sum_{k=0}^{d(v)}\binom{t-1}{k}\binom{n-t}{d(v)-k} = \frac{1}{(1-\epsilon)\Delta}\binom{n-1}{d(v)}.
\end{equation}
The last equality comes from recognizing the middle sum as the coefficient of $x^{d(v)}$ in the expansion of $(1+x)^{t-1}\cdot (1+x)^{n-t} = (1+x)^{n-1}$.
On the other hand, if $d(v)\geq \epsilon \Delta$, then
\begin{equation}\label{eqn: small degree}
\begin{split}
    \sum_{k=0}^{d(v)}\binom{t-1}{k}\binom{n-t}{d(v)-k}\frac{1}{\Delta+1-k} & \leq\sum_{k=0}^{d(v)}\binom{t-1}{k}\binom{n-t}{d(v)-k}\frac{1}{d(v)+1-k}.
\end{split}
\end{equation}
In order to bound this term, we notice that
\[
\sum_{k=0}^{\infty}\binom{n-t}{k}\frac{x^{k}}{k+1}=\frac{1}{x}\int_0^x \sum_{k=0}^{\infty}\binom{n-t}{k}s^kds=\frac{1}{x}\int_0^x (1+s)^{n-t}ds=\frac{(1+x)^{n-t+1}}{x(n-t+1)}.
\]
Hence the right-hand side in (\ref{eqn: small degree}) may be recognized as the coefficient of $x^{d(v)}$ in the expansion of 
\[
(1+x)^{t-1}\cdot\frac{(1+x)^{n-t+1}}{x(n-t+1)}=\frac{(1+x)^n}{x(n-t+1)},
\]
which is $\frac{1}{n-t+1}\binom{n}{d(v)+1}$.
Therefore,
\begin{align*}\label{eqn: Qt bound}
       \E[Q_t]&\le\sum_{v:\ d(v)<\epsilon \Delta}\frac{\binom{n-1}{d(v)}}{\binom{n-1}{d(v)}}\frac{1}{(1-\epsilon)\Delta}+\sum_{v:\ d(v)\ge \epsilon\Delta}\frac{\binom{n}{d(v)+1}}{\binom{n-1}{d(v)}}\frac{1}{n-t+1}\\
       &\le\sum_{v:\ d(v)<\epsilon \Delta}\frac{1}{(1-\epsilon)\Delta}+\frac{n}{n-t+1}\sum_{v:\ d(v)\ge \epsilon\Delta}\frac{1}{d(v)+1}.
\end{align*}
Notice $\epsilon$ here can be any constant, so the total number of query we need is 
\begin{align*}
    \E[\# \textrm{ of queries}]=\sum_{t=1}^n \E[Q_t]\le O(1)\cdot \frac{n}{\Delta}\cdot \sum_{t=1}^n\frac{n}{n-t+1} =O\left(\frac{n^2\log n}{\Delta}\right).
\end{align*}
\end{proof}
It is worth noting that this Las Vegas algorithm can be readily converted to a Monte Carlo algorithm that succeeds with probability 
$1-1/\mathrm{poly}(n)$, at the cost of an additional $\log n$ factor. By Markov's inequality, 
\[\Pr(\textrm{Algorithm 1 fails after }2\E[\# \textrm{ of queries}])\le 1/2.\]
Repeat this algorithm if it fails for $k\log_2 n$ times, the probability of failing all the time is bounded by $1/n^k$ for any constant $k$. 

%Such a Las Vegas algorithm can also be converted to a Monte Carlo algorithm: let the above algorithm runs until $n^{3/2}\sqrt{\log n}/\epsilon$. By Markov inequality, the probability that the algorithm fails is
% \[\Pr\{\text{Running time}\ge n^{3/2}\sqrt{\log n}/\epsilon\}\le \frac{\E[\text{Running time}]}{n^{3/2}\sqrt{\log n}/\epsilon}=\epsilon.\]
% This proves Theorem \ref{thm1}.

%IT WOULD BE NICE TO LEARN THE LOWER BOUND AND SEE IF WE CAN IMPROVE IT. MAYBE OUR ALGORITHM IS OPTIMAL? %

\section{Quantum Sublinear Algorithms}

In quantum computing, the fundamental unit of information is the qubit. Unlike a classical bit, which exists strictly in one of two states, 0 or 1, a qubit can exist in a superposition of both states simultaneously. Formally, a qubit is represented as a unit vector in the complex vector space $\C^2$.
Using Dirac’s ``bra-ket'’ notation, we denote a general qubit as $|\psi\rangle$ and denote the basis vectors for a pre-fixed orthonormal basis of $\C^2$ by $|0\rangle$ and $|1\rangle$.
Observing a qubit collapses its superposition to a definite basis state, $|0\rangle$ or $|1\rangle$. If we let $|\psi\rangle=\alpha|0\rangle+\beta|1\rangle$, then measuring a qubit yields $|0\rangle$ with probability $|\alpha|^2$ and $|1\rangle$ with probability $|\beta|^2$. For $n$ qubits, the system exists as a superposition of $2^n$ classical states, i.e \[|\psi\rangle = \sum_{0\le x\le 2^n-1}\alpha_x|x\rangle_n\textrm{, where }\sum_{0\le x\le 2^n-1}|\alpha_x|^2=1.\]
Qubits are manipulated by quantum gates, which can be represented as unitary transformations. Any operation of $n$ qubits can thus be described by a unitary transformation acting on $\C^{2^n}$. We refer the reader to \cite{Mermin_2007} for more details on quantum computing. 

In our problem, we transition from classical to quantum queries. Formally, any query problem can be viewed as evaluating an unknown function \( f: \{0,1\}^n \to \{0,1\}^m \), where \( f \) acts as an oracle, or black box, and each evaluation is called a query. The complexity is determined by the number of times \( f \) is queried. In the quantum setting, a standard approach to implementing \( f \) is to replace it with a unitary operator \( \mathcal{O}_f \) that acts on two registers: one for domain points and one for image points:
\[
\mathcal{O}_f |x, y\rangle = |x, y \oplus f(x)\rangle.
\]
Here, \( \oplus \) denotes the bitwise XOR operation. If \( f \) is binary-valued, we can also implement it as
\[
\mathcal{O}_f |x\rangle = (-1)^{f(x)} |x\rangle.
\]
To obtain this, we can apply the first oracle to $|x,-\rangle$, where $|-\rangle=\frac{1}{\sqrt{2}}(|0\rangle-|1\rangle)$, and only measure the first register. This conversion is sufficient for all known examples.

For instance, we consider a quantum adjacency query defined by the unitary map \( \mathcal{O}_A \) on the space \( \mathbb{C}^V \otimes \mathbb{C}^V \), given by
\[
\mathcal{O}_A |v, u\rangle = (-1)^{f_A(v, u)} |v, u\rangle,
\]
where \( f_A(v, u) = 1 \) if \( v \) and \( u \) are adjacent, and \( f_A(v, u) = 0 \) otherwise. Thus, when we refer to the number of quantum adjacency queries to \( G \), we mean the number of applications of \( \mathcal{O}_A \).

Morris and Song \cite{morrisSongColoring} observed that we can use Grover's \cite{grover98} algorithm to speed up the step of Algorithm \ref{alg: max degree color} where we check to see if vertex $v_{\sigma(t)}$ has a neighbor in the color class $\chi_t^{-1}(c)$ (they consider an algorithm for $(1+\epsilon)\Delta$-coloring that uses this same check as a subroutine).
Originally designed to find a unique marked item in an unsorted list of $N$ items,  Boyer, Brassard, H{\o}yer and Tapp \cite{boyer1998tight} showed that repeated applications of Grover's original algorithm could be used to find one of possibly multiple marked items, even when the number of marked items is unknown.

\begin{theorem}[Grover's Algorithm \cite{boyer1998tight}]\label{thm: Grover}
    Suppose we have an unsorted list of $N$ items, $k \geq 0$ of which are marked.
    Then there is a quantum algorithm that does the following:
    If $k > 0$, the algorithm outputs one marked item uniformly at random, using $O(\sqrt{N/k})$ quantum queries.
    If $k = 0$, then the algorithm concludes that there is no marked item using $O(\sqrt N)$ queries.
\end{theorem}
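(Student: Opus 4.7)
The plan is to establish the known-cardinality version first, then lift it to the unknown-$k$ regime via the exponential-search trick of Boyer, Brassard, Høyer and Tapp. Let $M \subseteq [N]$ be the set of marked items with $|M|=k$, and prepare the uniform superposition $|\psi_0\rangle = N^{-1/2}\sum_{x=1}^N |x\rangle$. Define the oracle $\mathcal{O}_f|x\rangle = (-1)^{[x\in M]}|x\rangle$ and the diffusion operator $D = 2|\psi_0\rangle\langle\psi_0| - I$. The Grover iterate is $G = D\cdot \mathcal{O}_f$. First I would observe that $G$ preserves the two-dimensional real subspace spanned by $|M\rangle := k^{-1/2}\sum_{x\in M}|x\rangle$ and $|M^{\perp}\rangle := (N-k)^{-1/2}\sum_{x\notin M}|x\rangle$, and that $|\psi_0\rangle = \sin\theta\,|M\rangle + \cos\theta\,|M^{\perp}\rangle$ with $\sin\theta = \sqrt{k/N}$.

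The core calculation is that $G$ acts on this 2D plane as a rotation by angle $2\theta$ toward $|M\rangle$: $\mathcal{O}_f$ reflects across $|M^{\perp}\rangle$, and $D$ reflects across $|\psi_0\rangle$, and the composition of two reflections is a rotation by twice the angle between their axes. Therefore $G^j|\psi_0\rangle = \sin((2j+1)\theta)|M\rangle + \cos((2j+1)\theta)|M^{\perp}\rangle$. Choosing $j \approx \lfloor \pi/(4\theta)\rfloor$ makes $(2j+1)\theta$ close to $\pi/2$, so measurement in the computational basis returns an element of $M$ with probability $1-O(\theta^2)$, uniformly at random by symmetry inside $M$. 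Since $\sin\theta = \sqrt{k/N}$ gives $\theta \geq \sqrt{k/N}$, this uses $O(\sqrt{N/k})$ queries.

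When $k$ is unknown, the obstacle is that we do not know the right number of iterations $j$, and overshooting rotates past $|M\rangle$. The fix I would implement is BBHT's procedure: maintain a bound $m$, sample $j$ uniformly from $\{0,1,\dots,\lfloor m\rfloor - 1\}$, run $G^j$ on $|\psi_0\rangle$, measure, and test the outcome against $\mathcal{O}_f$ with one extra query. If it is marked, return it; otherwise multiply $m$ by a constant $\lambda \in (1,4/3)$ and repeat, capping $m$ at $\sqrt{N}$. A geometric-series argument on the expected cost shows that once $m$ reaches $\Theta(\sqrt{N/k})$ the per-round success probability is bounded below by a constant, giving total expected complexity $O(\sqrt{N/k})$. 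For the $k=0$ case, the capped loop terminates after $O(\sqrt{N})$ queries without ever finding a marked item, which certifies emptiness. The main subtle point — and what I expect to take the most care — is showing that for a random $j\in\{0,\dots,m-1\}$ the probability of measuring a marked element averages to roughly $1/2$ as soon as $m\theta \gtrsim 1$, which reduces to bounding $\frac{1}{m}\sum_{j<m}\sin^2((2j+1)\theta)$ away from $0$ using a telescoping trigonometric identity.
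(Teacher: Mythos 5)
The paper does not prove this theorem at all; it states it with a citation to Boyer, Brassard, H{\o}yer and Tapp \cite{boyer1998tight} and treats it as a black box throughout (the inner \texttt{while} loop of \textbf{GroverNeighbors} is even described explicitly as ``the implementation of Grover's algorithm from \cite{boyer1998tight}''). Your sketch is a faithful reconstruction of the BBHT argument that the citation points to: the two-dimensional rotation picture with $\sin\theta=\sqrt{k/N}$ for known $k$, the exponential growth of the iteration bound $m$ by a factor $\lambda\in(1,4/3)$ for unknown $k$ (the paper's pseudocode uses $6/5$), and the cap at $\sqrt{N}$ handling $k=0$ all match the source. One small sharpening worth recording: the averaged success probability has the closed form
\[
\frac{1}{m}\sum_{j=0}^{m-1}\sin^2\big((2j+1)\theta\big)=\frac{1}{2}-\frac{\sin(4m\theta)}{4m\sin(2\theta)},
\]
so the rigorous lower bound once $m\ge 1/\sin(2\theta)$ is $1/4$ rather than ``roughly $1/2$''; that constant $1/4$ is precisely what the geometric-series bound on the expected number of rounds needs, and it is also the $p=\tfrac14$ that the paper later feeds into its coupon-collector lemma. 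With that numerical point tightened, your proposal is correct and complete.
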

This will be the main tool we use to build our quantum algorithms. We will prove theorem \ref{thm: quantum adjacency} in section 3.1 and prove theorem \ref{thm: quantum neighborhood} in section 3.2
% This algorithm was shown to be tight (up to the constant factor implicit in the big-$\mathcal O$ notation) in \cite{bennett1997strengths}.
% With repeated applications of this algorithm, Boyer, Brassard, H{\o}yer and Tapp \cite{boyer1998tight} showed that we can achieve an expected runtime of $\mathcal O(\sqrt{N/k})$ when there are $k$ values of $x$ for which $f(x) = 1$ and $k$ is unknown (and possibly zero).

\subsection{Quantum algorithm for $(\Delta+1)$-coloring}
Recall that in our algorithm \ref{alg: max degree color}, we need to determine whether a vertex 
$v_{\sigma(t)}$ is adjacent to any vertex in $\chi_t^{-1}(c)$ — that is, we check if any vertex assigned color $c$ in step $t$ is a neighbor of $v_{\sigma(t)}$. In the quantum version of the algorithm, the only modification occurs in this checking step: we replace the classical search process with Grover's algorithm. 
\begin{proof}[Proof of Theorem \ref{thm: quantum adjacency}]
We apply the same argument as in the proof of Theorem \ref{thm: classical Delta+1}. In the classical setting, to color vertex $v_{\sigma(t)}$ with a randomly chosen color $c$, we bound the expected number of adjacency queries used to find a  neighbor of $v_{\sigma(t)}$ in $\chi_t(c)$ by $n/(\Delta+1)$.  Then
\[
\E_c [Q_t(\sigma)] \leq \frac{n}{\Delta+1}\cdot \frac{1}{p_t(\sigma)}.
\]
We aim to improve this bound in the quantum setting.
Specifically, we can view this checking process as an instance of the unstructured search problem in a list of size $|\chi_t^{-1}(c)|$. Applying Theorem \ref{thm: Grover}, we obtain
\begin{align*}
\E_c[Q_t(\sigma)]& \leq \frac{1}{p_t(\sigma)}\cdot  \E_c\left[O\left(\sqrt{|\chi_t^{-1}(c)|}\right)\right]\\
&\leq \frac{1}{p_t(\sigma)} \cdot O\left(\sqrt{\E_c[|\chi_t^{-1}(c)|]}\right)\\
&= O\left(\sqrt{\frac{n}{\Delta+1}}\cdot \frac{1}{p_t(\sigma)}\right).
\end{align*}
Here we used Jensen's inequality to move to the second line.
Taking the expectation over $\sigma$ and following the proof of Theorem \ref{thm: classical Delta+1} gives $O(\frac{n^{3/2}\log n}{\sqrt \Delta})$ quantum adjacency queries.

\end{proof}

We summarize this algorithm in Algorithm \ref{alg: quantum adjacency color}.
\vspace{3mm}

\begin{algorithm}[H]\label{alg: quantum adjacency color}
\caption{QuantumAdjacencyColor($G,\Delta$)}
%\State Choose a random ordering for the vertices, $V = \{v_{\sigma(1)}, \ldots, v_{\sigma(n)}\}$
Choose a permutation $\sigma\in S_n$ uniformly at random

\For{$t = 1, \ldots, n$}{
    \While{$v_{\sigma(t)}$ is not colored}{
        Choose a color $c$ uniformly at random in $[\Delta + 1]$

        \If{Grover's algorithm finds a neighbor of $v_{\sigma(t)}$ in $\chi_t^{-1}(c)$}{
            \textbf{break} choose a new color
        }
        Assign color $c$ to $v_{\sigma(t)}$ and update $\chi(c)$ with $v_{\sigma(t)}$ 
        % \For{$u \in \chi_t^{-1}(c)$}{
        %     Query $A_{u,v_{\sigma(t)}}$
            
        %     \If{$A_{u,v_{\sigma(t)}} = 1$}{
        %         \textbf{break} choose a new color
        %     }
        %     Assign color $c$ to $v_{\sigma(t)}$ and update $\chi(c)$ with $v_{\sigma(t)}$
        % }
    }
}
\Return the color assignment $\chi: V\to [\Delta+1]$
\end{algorithm}

\subsection{Quantum algorithm for $(1+\epsilon)\Delta$-coloring}
Can we achieve a similar quantum speedup using neighborhood queries? Our motivating example is the following: Let \( G = G_{n,p} \) be the Erd\H{o}s-R\'enyi random graph, where each pair of vertices (independently) forms an edge with probability \( p \). Partition the vertex set \( V(G) = [n] \) into \( t \) subsets \( V_1, \ldots, V_t \) (assuming without loss of generality that \( t \mid n \)), where for each \( i \),
\[
V_i = \left[\frac{(i-1)n}{t} + 1, \frac{in}{t} \right].
\]
%Within each subgraph \( G[V_i] \), the degree of each vertex is bounded by \( (1+\epsilon)np/t \) with high probability by standard Chernoff bounds.
%Consequently, the indices of the neighbors of any vertex \( v \) within \( G[V_i] \) form an interval of length at most \( (1+\epsilon)np/t \).
 
Let's suppose that the neighborhood queries respect this vertex ordering—i.e., querying the \(i\)-th and \(j\)-th neighbors of \(v\) returns vertices \(u\) and \(w\), respectively, with \(u \le w\) when \(i \le j\).
Each vertex $v$ in $V_i$ has at most $(1+\epsilon)np/t$ neighbors in $V_i$ with high probability by the Chernoff bound, so the set of indices $I = \{1 \leq i \leq n: \text{the }i\text{-th neighbor of }v\text{ is in }V_i \}$ forms an interval of length at most $(1+\epsilon)np/t$.
We can then efficiently find the ``first'' neighbor of \(v\) within $G[V_i]$ using binary search and then query the next $(1+\epsilon)np/t$ neighbors to learn all of $N(v)\cap V_i$.

By repeating this process for each $v\in V_i$, we learn the entire edge set of $G[V_i]$.
We can now color each subgraph using a distinct color palette of size $(1+\epsilon)np/t$, thereby obtaining a proper $(1+\epsilon)np$-coloring of $G$ using only $O(n^2p/t)$ queries. For concentration of the degrees, we need $t=np/\log n$ and results an $O(n \log n)$ algorithm.  

This approach works well for random graphs because the neighbors of any particular vertex are more or less evenly distributed throughout the graph.
In the standard setting where neighborhood queries do not adhere to some fixed order, however, querying a specific interval of a vertex \(v\)’s neighbors may not capture neighbors that belong to the same part as \(v\).
If we could somehow pluck the neighbors of $v$ that live in $V_i$ from the neighborhood list of $v$, then we might be able to construct a proper coloring in fewer queries than the classical greedy algorithm.
We will show that Grover's algorithm allows us to accomplish exactly this in the quantum setting.

%On the other hand, through quantum neighbor query, we can efficiently learn the corresponding neighborhood inside each subgraph. 

%We claim that when we randomly (but equitably) partition the vertices of a graph $G$, we are able to learn all the edges inside each subgraph by $O()$
To begin, we define the oracle used in our quantum algorithm.
For any vertex $v\in V(G)$, $1\le j\le \Delta$, let the query function $f_N:V\times [\Delta]\to V\cup \{0\}$ be given by 
\[f(v,j)= \begin{cases}
u_j, &\text{ if $u_j$ is the $j$-th neighbor of $v$ }\\
0, &\text{ otherwise. }
\end{cases}\]
Then a quantum neighborhood query can be implemented by the map $\mathcal O_N$, which operates on $V\otimes [n]\otimes (V\cup \{0\})$ by
\begin{equation}\label{eqn: quantum neighborhood query}
\mathcal O_N |v, j, u\rangle = |v,j,u\oplus f(v,j)\rangle.
\end{equation}
This map is unitary, and moreover, self-inverse, i.e., $\mathcal O_N\circ \mathcal O_N=I_{V\otimes [n]\otimes (V\cup \{0\}) }$. 

Now we proceed with our algorithm.
We start by demonstrating that, when we randomly and equitably partition $V(G)$ into $t$ parts, that is, each part has size either $\lceil n/t\rceil$ or $\lfloor n/t\rfloor$, then a roughly $\frac 1t$-fraction of a vertex' neighborhood follows it to its part.

%We start by demonstrating that, when we select a random equitable partition of $V(G)$, where each part is nearly equal in size and chosen uniformly at random, the expected fraction of each vertex's neighborhood that remains within its part matches the intended proportion.

\begin{lemma}\label{lem: chernoff_degree}
    Fix $\epsilon > 0$ and let $G = (V, E)$ be a graph on $n$ vertices with maximum degree $\Delta$.
    Set $t = \frac{\epsilon^2\Delta}{6\log n}$.
    If we equitably partition $V = V_1 \cup \cdots \cup V_t$ at random, then with probability at least $1-\frac 1n$, the maximum degree of each $G[V_i]$ is at most $(1+\epsilon)\frac \Delta t$.
\end{lemma}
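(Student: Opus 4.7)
The plan is to fix a vertex $v$, condition on which part it lands in, show the number of its neighbors that follow it to that part is tightly concentrated around $\Delta/t$ via a Chernoff bound for the hypergeometric distribution, and then union-bound over all $n$ vertices.

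First I would formalize the random equitable partition by picking a uniformly random permutation of $V$ and letting $V_1, \ldots, V_t$ be the consecutive blocks of sizes $\lceil n/t \rceil$ or $\lfloor n/t \rfloor$. Fix any vertex $v \in V$ and suppose $v$ lands in $V_i$; then among the remaining $n-1$ vertices (of which $d(v) \le \Delta$ are neighbors of $v$), exactly $|V_i| - 1$ are chosen uniformly at random to join $v$ in $V_i$. Hence the random variable
\[
X_v := |N(v) \cap V_i|
\]
is hypergeometrically distributed with parameters $(n-1,\, d(v),\, |V_i|-1)$, and its expectation satisfies
\[
\E[X_v] \;=\; d(v) \cdot \frac{|V_i| - 1}{n - 1} \;\le\; \frac{\Delta}{t}.
\]

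Next I would apply the Chernoff bound for the hypergeometric distribution (the same tail bound as for sums of independent Bernoullis, using $M = \Delta/t$ as an upper bound on the mean) to obtain
\[
\Pr\!\left[ X_v > (1+\epsilon) \tfrac{\Delta}{t} \right] \;\le\; \exp\!\left( -\frac{\epsilon^2 \Delta}{3t} \right).
\]
Plugging in the prescribed value $t = \epsilon^2 \Delta / (6 \log n)$, the exponent becomes $-2\log n$, so this probability is at most $1/n^2$.

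Finally, a union bound over all $n$ vertices shows that with probability at least $1 - 1/n$, every vertex has at most $(1+\epsilon)\Delta/t$ neighbors in its own part; equivalently, the maximum degree of each $G[V_i]$ is at most $(1+\epsilon)\Delta/t$. The main (minor) subtlety to address is that $|V_i|$ may equal $\lceil n/t \rceil$ rather than exactly $n/t$, but since $\lceil n/t \rceil / n \le 1/t + 1/n$, this only affects the mean by a lower-order term easily absorbed into $(1+\epsilon)$ for large enough $n$; apart from that the argument is entirely routine.
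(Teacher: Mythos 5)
Your proof is correct and matches the paper's approach: bound $\E[d_i(v)] \le \Delta/t$, apply the Chernoff bound for the hypergeometric distribution, and union bound over the $n$ vertices. One small note: the ceiling issue you flag at the end is actually a non-issue, since $|V_i| - 1 \le \lceil n/t\rceil - 1 \le (n-1)/t$, so $\E[X_v] = d(v)\frac{|V_i|-1}{n-1} \le \Delta/t$ holds exactly without any absorption into $(1+\epsilon)$.
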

\begin{proof}
    Let $d_i(v)$ denote the number of neighbors of $v$ in $G[V_i]$ for each $v\in V_i$. Then $\E[d_i(v)]\le \frac{\Delta}{t}$.
    By the Chernoff bound, with $t=\frac{\epsilon^2\Delta}{6\log n}$
    \begin{align*}
        \Pr\left(d_i(v)\ge (1+\epsilon)\frac{\Delta}{t}\right)\le \exp\left(-\frac{\epsilon^2\Delta}{3t}\right)\le \frac{1}{n^2}.
    \end{align*}
    Now union bound over all $n$ vertices. 
\end{proof}

After partitioning $V(G)$ as in the above lemma, our plan is to assign each part its own color palette and mimic the greedy coloring algorithm in each part.
With this in mind, consider some $v\in V_i$.
With probability at least $1/4$, it takes $O(\sqrt \Delta)$ quantum neighborhood queries to Grover search $N(v)$ for a neighbor $u$ of $v$ that also lives in $V_i$ (or conclude that there is no such neighbor).
Moreover, the neighbor that Grover's algorithm outputs is random and uniformly distributed over all neighbors in $V_i$.
Finding all neighbors of $v$ that live in $V_i$ then amounts to running a sort of coupon collector process, with the twist that we have a $1/4$ chance of receiving a coupon at all.
We are still able to obtain a useful tail estimate despite this twist.

\begin{lemma}\label{lem: coupon collector}
    Consider a process where at each step, an event occurs with success probability $p$ and upon a success, one of $k$ distinct outcomes is chosen uniformly at random. Let $T$ be the number of steps required until all $k$ distinct outcomes have been observed at least once. Then for any positive constant $C$,
    \[
    \Pr\left[T \geq \frac Cp k \log k\right] \leq k^{-C+1}.
    \]
\end{lemma}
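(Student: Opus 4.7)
The plan is to run a standard coupon-collector union bound, with the observation that the only change from the textbook version is that each ``draw'' succeeds with probability $p$ instead of probability $1$.

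First, for each outcome $i \in [k]$, let $T_i$ denote the first step at which outcome $i$ is observed. At every step, the probability that outcome $i$ is produced is exactly $p \cdot \tfrac{1}{k} = \tfrac{p}{k}$, independently across steps. Hence $T_i$ is geometric with parameter $p/k$, and
\[
\Pr[T_i > m] = \left(1 - \tfrac{p}{k}\right)^m \leq \exp\!\left(-\tfrac{pm}{k}\right).
\]

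Second, I would set $m = \tfrac{C}{p} k \log k$ so that $\exp(-pm/k) = k^{-C}$, giving $\Pr[T_i > m] \leq k^{-C}$ for each fixed $i$. Since the event $\{T \geq m\}$ is contained in the union $\bigcup_{i=1}^{k} \{T_i > m\}$, a union bound over the $k$ outcomes yields
\[
\Pr\!\left[T \geq \tfrac{C}{p} k \log k\right] \;\leq\; \sum_{i=1}^{k} \Pr[T_i > m] \;\leq\; k \cdot k^{-C} \;=\; k^{-C+1},
\]
which is the desired bound.

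There is really no obstacle here; the only point that needs a line of justification is that the per-step probability of producing a \emph{specific} outcome factors cleanly as $p/k$ (success times uniform choice), which makes the marginal waiting times genuinely geometric and allows the textbook union-bound argument to go through unchanged. I would not pursue a sharper concentration result (e.g.\ via the maximum of independent geometrics), since the statement only asks for a polynomial tail and the union bound already delivers it.
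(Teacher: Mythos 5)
Your proof is correct and follows essentially the same approach as the paper: bound $\Pr[T_i > m] \leq e^{-pm/k}$ for each outcome $i$, then union bound over the $k$ outcomes with $m = \frac{C}{p}k\log k$. Your write-up is slightly more explicit (naming the $T_i$ as geometric variables), but the argument is identical.
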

\begin{proof}
    The probability that the $i$-th outcome has not been observed after step $t$ is
    \[
    \left(1 - \frac pk  \right)^t \leq e^{-pt/k}.
    \]
    Union bounding over all $k$ coupons and setting $t = \frac Cp k \log k$ proves the result.
\end{proof}

As $v$ has at most $(1+\epsilon)\frac \Delta t$ neighbors in $V_i$ (with high probability), Lemma \ref{lem: coupon collector} shows that we need at most $O_\epsilon(\frac{\Delta}{t} \log \frac{\Delta}{t})$ Grover searches to find all of them.
Once we have these neighbors, we can assign $v$ a color not seen among them from the palette assigned to $V_i$.
We collect the details of this procedure in Algorithm \ref{alg: quantum neighborhood color}, \textbf{QuantumNeighborhoodColor}, and its subroutine, Algorithm \ref{alg: grover neighbors}, \textbf{GroverNeighbors}.
The proof of its correctness and analysis of its runtime will comprise the proof of Theorem \ref{thm: quantum neighborhood}.
\vspace{3mm}

\begin{algorithm}[H]\label{alg: quantum neighborhood color}
\caption{QuantumNeighborhoodColor($G, \epsilon$)}
%\State Choose a random ordering for the vertices, $V = \{v_{\sigma(1)}, \ldots, v_{\sigma(n)}\}$
Let $\mathcal P = \{V_1, \ldots, V_t\}$ be a random equitable partition of $V(G)$ into $t = \frac{\epsilon^2\Delta}{6\log n}$ parts

Let $P_1, \ldots, P_t$ be pairwise disjoint color palettes, each with $(1+\epsilon)\frac{\Delta}{t}$ colors

Prepare the operator $W = W(\mathcal P)$ as described in (\ref{eqn: partition operator})

\For{$i = 1, \ldots, t$}{
    \For{$v\in V_i$}{
        $N_i(v) \gets$GroverNeighbors$(G, v, \epsilon, \mathcal P, W)$

        Set $\chi(v)$ to an arbitrary color from $P_i$ not seen among those in $\chi^{-1}(N_i(v))$
    }
}

\Return The color assignment $\chi: V\to (P_1 \cup \cdots \cup P_t)$
\end{algorithm}

\SetKwFor{RepTimes}{repeat}{times}{end}

\begin{algorithm}[H]\label{alg: grover neighbors}
\caption{GroverNeighbors($G, v, \epsilon, \mathcal P, W$)}

$d\gets d(v)$

$k \gets \min\big( (1+\epsilon)\frac{\Delta}{t}, d  \big)$

$N_i(v) \gets \emptyset$

$|U\rangle \gets d^{-1/2}\sum_{j=1}^{d}|v, j, 0\rangle$

Prepare the operators $R_U = 2|U\rangle \langle U| - I$ and $R_B = \mathcal O_N W\mathcal O_N$

\RepTimes{$8k \log k\cdot \frac{\log n}{\log \log n}$}{
    $m\gets 1$
    
    \While{$m \leq \sqrt{d}$}{
        Set $j$ to a uniformly chosen random integer between 0 and $m$
    
        $|\psi\rangle \gets d^{-1/2}\sum_{j=1}^{d}|v, j, 0\rangle$
    
        $|\psi\rangle \gets (R_UR_B)^j|\psi\rangle$
    
        Measure $|\psi\rangle$ in the $\{|v, j, 0\rangle: 1\leq j\leq d\}$ basis to obtain $|v, j^*, 0\rangle$
    
        Let $u$ be the result of the classical neighborhood query for the $j^*$-th neighbor of $v$

        \If{$u\in V_i$}{

            $N_i(v) \gets N_i(v) \cup \{u\}$

            \textbf{break}

        }
    
        $m\gets \min( \frac{6}{5}m, \sqrt d)$
    }
}
\Return The set of neighbors $N_i(v)$
\end{algorithm}

% \begin{lemma}\label{lem: quantum neighborhood color}
%     Fix $\epsilon > 0$ and let $G = (V, E)$ be a graph on $n$ vertices with maximum degree $\Delta$.
%     Partition $V(G)$ as in Lemma \ref{lem: chernoff_degree}.
%     Then there is a quantum algorithm that properly colors the vertices of $G$ using at most $(1+\epsilon)\Delta$ colors and
%     \[
%     \mathcal O\left( \frac{1}{\epsilon^2}n (\log n)^2 \sqrt{\Delta} \right)
%     \]
%     quantum neighborhood queries.
%     This algorithm succeeds with probability at least $2/3$.
% \end{lemma}

\begin{proof}[Proof of Theorem \ref{thm: quantum neighborhood}]
    Set $t = \frac{\epsilon^2\Delta}{6\log n}$ and choose an equitable partition $V = V_1 \cup \cdots \cup V_t$ at random from the set of all such partitions and prepare the unitary operator $W$ on $V\otimes [n]\otimes (V\cup \{0\})$ that does the following:
    \begin{equation}\label{eqn: partition operator}
    W|v, j, u\rangle = \begin{cases}
    -|v, j, u\rangle, &\text{ if }u,v\in V_i\text{ for some }i\\
    |v, j, u\rangle,&\text{ otherwise}
    \end{cases}.
    \end{equation}
    In other words, $W$ indicates whether or not $u$ and $v$ lie in the same part.
    Constructing $W$ requires no knowledge of $E(G)$, and hence, uses no neighborhood or adjacency queries.

    We proceed part by part, coloring each vertex in that part one by one.
    To find the neighbors of $v\in V_i$ that also live in $V_i$, start by querying the degree $d(v)$ and then prepare the uniform superposition over the indices of its neighbors:
    \[
    |U_v\rangle = d(v)^{-1/2}\sum_{j=1}^{d(v)}|v, j, 0\rangle.
    \]

    Write $N(v) = \{u_1, \ldots, u_{d(v)}\}$ and suppose $v$ has $r(v)$ neighbors in its part $V_i$.
    If we define the states $|G\rangle$ and $|B\rangle$ by
    \[
    |G\rangle  = r(v)^{-1/2}\sum_{j: u_j \in V_i} |v, j, 0\rangle,\qquad |B\rangle = \big(d(v)-r(v)\big)^{-1/2}\sum_{j: u_j\notin V_i}|v, j, 0\rangle,
    \]
    then Grover's algorithm finds one of these $r(v)$ neighbors by manipulating the state $|U_v\rangle$ in the $|G\rangle,|B\rangle$-plane until it lies on (or sufficiently close to) $|G\rangle$.
    We accomplish this by repeatedly reflecting the current state about $|B\rangle$ and then about $|U_v\rangle$.
    Reflection through $|U_v\rangle$ is implemented by the unitary map
    \[
    R_U = 2|U_v\rangle \langle U_v| - I,
    \]
    and reflection through $|B\rangle$ is implemented with the quantum neighborhood oracle  $\mathcal O_N$ (\ref{eqn: quantum neighborhood query}) by
    \[
    R_B = \mathcal O_N W \mathcal O_N.
    \]
    The inner \texttt{while} loop of \textbf{GroverNeighbors} is the implementation of Grover's algorithm from \cite{boyer1998tight}, which outputs one of the $r(v)$ desired neighbors using $O(\sqrt{d(v)}) = O(\sqrt \Delta)$ queries, succeeding with probability at least $\frac 14$.

    % Apply $\mathcal O_N$ to this state to obtain the uniform superposition on $N(v) = \{u_1, \ldots, u_{d(v)}\}$.
    % Denote this state by $|U_v\rangle$.
    % \[
    % |U_v\rangle = d(v)^{-1/2}\sum_{j=1}^{d(v)}|v, j, u_j\rangle.
    % \]
    % This costs us $n$ degree queries. Next we apply a amplitude amplification on this qubit. In particular, we iterate the following two steps:
    %     \begin{enumerate}
    %         \item[(1)]\begin{center}$|\psi_v^i\rangle \gets \mathcal O_N^{-1}W^{i}\mathcal O_N |U_v\rangle.$\end{center}
    %         \item[(2)]\begin{center}Reflect $|\psi_v^i|$ through $|U_v\rangle$.\end{center}
    %     \end{enumerate}  
    % The first step applies our oracle $\mathcal O_N$ to obtain a uniform superposition of all the neighbors of $v$, then reflect those  in $V_i$. Specifically, let the index set $J_i(v)=\{j: u_j \textrm{ is the $j$-th neighbor of $v$ and $u_j\in V_i$} \}$,
    % \[|\psi_v^i\rangle =d(v)^{-1/2}\left(-\sum_{j\in J_i(v)}|v,j,0\rangle+\sum_{j\not\in J_i(v)}|v,j,0\rangle\right).\]
    % The second step can be implemented by the reflection operator $\mathcal{R}_v:=2|U_v\rangle\langle U_v|-1$. By doing so, we reflected $|\psi_v^i\rangle$ by the uniform superposition of all neighbors of $v$. Hence amplified the amplitude of those $|v,j,0\rangle$ with $j\in J_i(v)$.
    % Notice this process is similar to Grover's algorithm or any other amplitude amplification algorithm. For more references, see \cite{grover96,amp_Brassard_Hoyer97,grover98}. 
    
    Given that this inner loop returns a neighbor of $v$ that lives in $V_i$, it is uniformly distributed over these neighbors.
    Consequently, if we rerun Grover's algorithm until we have found all $r(v)$ of $v$'s neighbors that live in $V_i$, we obtain the modified coupon collector process from the setting of Lemma \ref{lem: coupon collector}.
    Applying this lemma with $p = \frac{1}{4}$, $k = (1+\epsilon)\frac \Delta t$ and $C = \frac{2\log n}{\log \log n}$, we see that $8k \log k \cdot \frac{ \log n}{\log \log n}$ Grover searches fail to find all neighbors in $V_i$ with probability less than $\frac{1}{4n}$.
    Once \textbf{GroverNeighbors} finds all relevant neighbors, \textbf{QuantumNeighborhoodColor} simply assigns $v$ a color not seen among these neighbors.

    This procedure fails to produce a proper $(1+\epsilon)\Delta$ coloring of $G$ only if, for some $i$ and $v\in V_i$, we fail to find all neighbors of $v$ that lie in $V_i$.
    When we include the possibility that the maximum degree of some $G[V_i]$ exceeds $(1+\epsilon)\frac \Delta t$ (Lemma \ref{lem: chernoff_degree}), the total failure probability is at most $\frac 14  + \frac 1n \leq \frac 13$.
    Upon totalling up the maximum number of iterations of the loops within \textbf{GroverNeighbors}, we see that the number of calls to the quantum neighborhood oracle $\mathcal O_N$ is at most
    \begin{align*}
        n\cdot O(\sqrt \Delta) \cdot O\left( \frac \Delta t \log \frac \Delta t \cdot \frac{ \log n}{\log \log n} \right) = O\left(\frac{1}{\epsilon^2} n (\log n)^2 \sqrt \Delta  \right).
    \end{align*}

\end{proof}

\section{Discussion}
We would like to mention a few possibilities for future work.
\begin{itemize}
    \item \textbf{Improvement in $(\Delta+1)$-Coloring:} If we could obtain a $\tilde{O}(n\sqrt{\Delta})$ quantum neighborhood query algorithm for $(\Delta+1)$-coloring, we could combine it with our $\tilde{O}(n^{3/2}/\sqrt{\Delta})$ quantum adjacency query algorithm to achieve a $\tilde{O}(n^{5/4})$ query algorithm for $(\Delta+1)$-coloring.
    The factor of $(1+\epsilon)$ in our algorithm arises from the step where we randomly partition $V(G)$ to ensure that no vertex has too many neighbors within its own partition. However, using a random partition is unlikely to ensure that each vertex has at most $\frac{\Delta+1}{t}$ neighbors within its part. For the partitioning approach to be effective in $(\Delta+1)$-coloring, we suspect a more sophisticated partitioning method would be required.
    \item \textbf{Lower Bound for Quantum Algorithm:} In the classical setting, it is known that $\Omega(n^{3/2})$ queries are required for $(\Delta+1)$-coloring, and thus our classical algorithm is tight up to a $\sqrt{\log n}$ factor. However, in the quantum setting, a polynomial gap remains.  Morris and Song \cite{morrisSongColoring} showed that $\Omega(n)$ quantum adjacency queries are necessary for $O(\Delta)$-coloring by considering the optimality of Grover’s search algorithm, and we would like to know whether this bound can be achieved.
    \item \textbf{Other Computational Models:} It is worth noting that the query model is not the only interesting model in this area. The authors of \cite{assadiChenKhannaSublinear} and \cite{AlonAssadi} have also studied $(\Delta+1)$-coloring in the streaming and massively parallel computation (MPC) models, and the authors of \cite{ChangCongestedClique,parterCongest} consider the congested clique and local computational models.
    Could our algorithm, with appropriate modifications, be adapted to work in these models?
\end{itemize}

\bibliographystyle{abbrv}
\bibliography{coloring}

\end{document}